\newsavebox\CBox
\def\textBF#1{\sbox\CBox{#1}\resizebox{\wd\CBox}{\ht\CBox}{\textbf{#1}}}
\newtheorem{theorem}{Theorem}
\newtheorem{cor}{Corollary}
\newtheorem{op}{Optimization Problem}
\def\textBF#1{\sbox\CBox{#1}\resizebox{\wd\CBox}{\ht\CBox}{\textbf{#1}}}
\newcommand{\model}{{SPLAM}}
\newcommand{\spam}{{SpAM}}
\title{Sparse Partially Linear Additive Models}
\author{
Yin Lou$^1$ \quad Jacob Bien$^1$ \quad Rich Caruana$^2$ \quad Johannes Gehrke$^1$\\
$^1$Cornell University \quad $^2$Microsoft Research\\
\texttt{\{yinlou, johannes\}@cs.cornell.edu}\\
\texttt{jbien@cornell.edu} \quad \texttt{rcaruana@microsoft.com}\\
}
\def\N{\mathcal{N}}
\def\L{\mathcal{L}}
\begin{document}

\maketitle
\bibliographystyle{natbib}

\bigskip
\begin{abstract}
The generalized partially linear additive model (GPLAM) is a flexible
and interpretable approach to building predictive models.  It combines
features in an additive manner, allowing each to have either a
linear or nonlinear effect on the response.  However, the choice of which features to treat as linear or nonlinear is typically assumed
known.  Thus, to make a GPLAM a viable approach in situations in which
little is known {\em a priori} about the features, one must overcome
two primary model selection challenges: deciding which features to
include in the model and determining which of these features to treat
nonlinearly.  We introduce the sparse partially linear additive model
(\model), which combines model fitting and {\em both} of these model
selection challenges into a single convex optimization problem.
\model~provides a bridge between the lasso and sparse additive models.  Through a statistical oracle inequality
and thorough simulation, we demonstrate that \model~can outperform
other methods across a broad spectrum of statistical regimes, including the
high-dimensional ($p\gg N$) setting.  We develop efficient algorithms
that are applied to real data sets with half a million samples and
over 45,000 features with excellent predictive performance.
\end{abstract}

\section{Introduction}
\label{sec:intro}

Generalized partially linear additive models (GPLAMs, \citealt{hardle2007plm}) provide an
attractive middle ground between the simplicity of generalized linear
models (GLMs, \citealt{nelder1972generalized}) and the flexibility of generalized additive models
(GAMs, \citealt{hastie1990gam}).  Given a data set $\{(x_i, y_i)\}_{i = 1}^N$, a GPLAM relates the conditional mean of the response, $y_i$, to the $p$-dimensional predictor vector, $x_i$, using a known link function, $g$:
\begin{align}
  g(E[y_i|x_i]) = \sum_{j\in\N}f_j(x_{ij}) + \sum_{j\in\L}x_{ij}\theta_j.\label{eq:gplam}
\end{align}
The features in $\N$ contribute to the model in a nonlinear fashion while the features in $\L$ contribute in a linear fashion.  A GLM treats all features as being in $\L$ and may therefore be biased when nonlinear effects are present; on the other extreme, a GAM treats all features as being in $\N$, which incurs unnecessary variance for the features that should be treated as linear.  GPLAMs are a popular tool for data analysis in multiple domains including economics~\citep{engle1986semiparametric, green1993nonparametric} and biology~\citep{lian2012identification, dinse1983regression}.

A major obstacle to using GPLAMs on large-scale data sets is that one rarely
knows {\em a priori} which features should be assigned to $\N$ and $\L$.  
A further challenge is in deciding which features should be excluded from the model entirely.
The goal of this paper is to make GPLAMs a viable tool for building large-scale predictive models.  To do so, we must overcome two model-selection
challenges: automatically deciding which features are at all relevant in the model and deciding which of those features should be fit
linearly versus nonlinearly.

In the context of GAMs (where $\L$ is taken to be empty), the sparse additive model (\spam) is a useful framework for performing feature selection on $\N$ \citep{ravikumar2009spam}.  From the perspective of a GPLAM, \spam~takes
an ``all-in" or ``all-out" approach to feature selection.  In this work, we introduce the sparse partially linear additive model (\model) that provides the finer-grained selection demanded by a GPLAM.  \model s~build on the \spam~framework, providing a natural bridge between the $\ell_1$-penalized GLM and \spam, thereby reaping many of the benefits enjoyed by both of these methods.

Failing to account for exactly linear features is disadvantageous statistically, computationally, and in terms of interpretability. As a
motivating example, consider a situation in which $p=1,000$, $|\N|=5$,
and $|\L|=295$.  Assuming the correct set of features is selected,
\spam~would include 300 features.  From an interpretability
standpoint, one would have to manually inspect the 300 nonparametric
fits to reveal that only 5 features are effectively nonlinear.  The
other 295 of them would appear nearly, but not exactly, linear such as
in Figure~\ref{fig:synth} (d).  Statistically, a price is paid in variance for the many nearly-linear features; and, computationally, such a model is wasteful both in terms of memory and speed for making future predictions.

In the last several years, a number of methods have been proposed to address various aspects of this problem. 
   In \citet{chen2011determination}, a bootstrap-based test is developed to determine the linearity of a component.  In \citet{huang2012semiparametric}, the authors use a group MCP penalty to decide which features should be linear versus nonlinear, but features may not be completely excluded from the model.   
In \citet{du2012semiparametric}, an algorithm is developed that iterates between two optimization problems: one that decides which nonlinear features should be made linear and the other that decides which linear features should be set to zero.  
An alternative approach to \spam~is the component selection and smoothing operator (COSSO) method, which uses unsquared reproducing kernel Hilbert space (RKHS) norm penalties \citep{lin2006component}.  The linear and nonlinear discoverer extends COSSO to the GPLAM setting \citep{zhang2011linear}.  Relatedly, \citet{lian2012identification} combine smoothness and sparsity SCAD-based penalties for a similar purpose.
None of the above methods is geared toward high-dimensional data in terms of statistical theory or computation.  
Several other methods are geared toward the high-dimesional setting but do not perform both model selection tasks.  For example, in \citet{bunea2004consistent,xie2009scad,muller2013partial}, methods are developed to
perform feature selection for the set $\L$ while assuming that the set $\N$
is known; \citet{lian2013generalized} and~\citet{wang2014estimation} perform feature selection on
both $\L$ and $\N$ individually but assume an initial partition of the
features into those potentially in $\L$ and those potentially in $\N$.

By contrast, \model~is designed for large-scale datasets (for example, we apply it to a dataset with $p=47,236$).  \model~is formulated through a single convex optimization problem that admits an efficient algorithm and strong theoretical properties even in the $p\gg N$ setting.

In Section \ref{sec:optimization}, we define the \model~estimator as the solution to a convex optimization problem, and, in Section \ref{sec:computation}, we discuss how this problem may be efficiently solved in large-scale contexts.  Section \ref{sec:theory} presents consistency results under weak assumptions and lends theoretical support to the conceptual difference in predictive performance between \model~and \spam, its close relative.  Section \ref{sec:experiments} provides an empirical study of \model, including both a thorough simulation study and comparison of \model~to other methods on an array of large data examples.

\section{The \model~Optimization}
\label{sec:optimization}

We approach the challenging model selection and fitting problem posed by a GPLAM through convex relaxation.  For each feature $x_j$, we perform an $M$-dimensional basis expansion $b(x_j) = [b_1(x_j), ..., b_M(x_j)]$ in which $b_1(x_j) = x_j$ and $M$ is typically small. Our main requirement of the basis is that $b_1(x_j)$ models the linear part and that $[b_2(x_j), ..., b_M(x_j)]=:b_{-1}(x_j)$ models the nonlinear part. 
\model~estimates each $f_j(\cdot)$ by a function in the space spanned by $b(\cdot)$, i.e., $f_j(x_j) = b(x_j)\beta_j$, where $\beta_j\in\mathbb{R}^M$ (for ease of exposition we ignore the intercept).  We use $\beta_{j1}$ and $\beta_{j,-1}$ to denote the coefficients of the linear and nonlinear basis functions, respectively. Letting $\beta = [\beta_1^T, ..., \beta_p^T]^T\in\mathbb{R}^{pM}$ and $X=[b(x_1):\cdots:b(x_p)]\in\mathbb{R}^{N \times pM}$ be the design matrix, we have 
$$
X\beta=\sum_{j=1}^p\left[\beta_{j1}x_j+\beta_{j,-1}b_{-1}(x_j)\right].
$$
Given a convex smooth loss function $L(y, X, \beta)$, \model~is formulated as the solution to the following convex program with hierarchical sparsity regularization:
\begin{op}{\model}\label{op:splam}
\begin{align}
\min_{\beta} L(y, X, \beta) + \lambda \Omega^{\model}(\beta)
\end{align}
\end{op}
where 
$
\Omega^{\model}(\beta) = \sum_{j=1}^p\left[\alpha\|\beta_j\|_2 + (1 - \alpha)\|\beta_{j, -1}\|_2\right],
$
 $\lambda \ge 0$, and $\alpha\in[0,1]$.

In this paper, we focus on linear regression, in which $L(y, X, \beta) = \frac{1}{2N}\|y - X\beta\|_2^2$ and logistic regression, in which $L(y, X, \beta) = \frac{1}{N}\sum_i \log(1 + \exp[-y_i\sum_{k=1}^{pM}X_{ik}\beta_k])$, where $y_i \in \{-1, 1\}$. The penalty function $\Omega^{SPLAM}$ is convex and is an instance of the hierarchical group lasso \citep{zhao2009composite,jenatton2010proximal}.  Its two terms address the two forms of model selection present in the GPLAM problem: the first term affects the overall number of predictors appearing in the fitted model; the second term controls the number of those features that are treated nonlinearly. 

Just as GPLAMs generalize both GLMs and GAMs, it is useful to note that \model~includes the most common penalized estimators used for these two kinds of models.
\begin{itemize}
\item When $\alpha = 1$ and an orthogonal basis is used, Problem~\ref{op:splam} becomes \spam~in group lasso form.
\item When $\lambda=\tilde\lambda/\alpha$ and $\alpha$ is sufficiently small, \model~reduces to the lasso \citep{tibshirani1996regression} applied to the linear features only.
\end{itemize}

In practice, we solve the \model~problem over a grid of $(\lambda, \alpha)$ pairs. Our strategy is to fix $\alpha$ and solve the problem pathwise starting from the smallest value of $\lambda$ for which $\hat\beta_j = 0$ for all $j = 1, ..., p$ and decreasing $\lambda$ exponentially.  In Section~\ref{sec:oracle-inequality}, we prove that \model~is consistent under general conditions for $\alpha=(1+\sqrt{6})/(1+2\sqrt{6})$ and suitably chosen $\lambda$.

After posting the initial draft of this paper online, we learned of a
similar method being developed independently and concurrently to ours
\citep{chouldechova2015generalized}; their approach to the GPLAM problem also
makes use of an overlapping group lasso penalty, but uses a different
form of penalty known as the {\em latent overlapping group lasso
  penalty} \citep{Obozinski2011}. Also, \citet{Petersen2014flam} in a recent preprint propose a method that combines feature selection and non-linear, piecewise-constant modeling using a fused-lasso penalty.

\section{Computation}
\label{sec:computation}

The hierarchical group lasso can be solved efficiently by proximal gradient descent~\citep{beck2009fista} as described in~\citet{jenatton2010proximal}.
The idea of this algorithm is to modify the standard gradient steps that one would take if simply minimizing $L$ and then apply the proximal operator of the nondifferentiable penalty, $\lambda\Omega^{\model}(\cdot)$:
\begin{align}
\beta^{k + 1} = \arg\min_{z\in\mathbb{R}^{pM}} \{\frac{1}{2t^k}\|z - (\beta^k - t^k\nabla L(y, X, \beta^k))\|_2^2 + \lambda\Omega^{\model}(z)\},
\end{align}
where $t^k$ is a suitable step size. It is known that setting the step
size to the reciprocal of the Lipschitz constant of $\nabla L$
guarantees convergence~\citep{beck2009fista}. A key property of
hierarchical penalties such as $\Omega^{\model}$ is that the proximal
operator can be very efficiently solved.   In particular,
\citet{jenatton2010proximal} show that the dual of this problem can be
solved in a single pass of block coordinate descent (and therefore has
essentially a closed form).  While the proximal gradient method as
described above can be used to solve this problem, we observe that a
closely related method, called {\em block coordinate gradient descent}
performs better in practice for solving the \model~problem in
large-scale settings.  Furthermore, in the regression setting,
we develop an even more efficient approach that solves the problem by
applying the proximal operator only once.  
Additional details about our implementation are given in the supplementary material.

\subsection{Block Coordinate Gradient Descent}
\label{sec:bcgd}

The block coordinate gradient descent (BCGD) method is a hybrid of blockwise coordinate descent (BCD) and a proximal method.  A simple quadratic approximation of $L$ is used in each coordinate update.  The particular form of BCGD we propose is to apply the proximal operator one block at a time, allowing each block update to use a distinct step size.  We find that empirically this is more efficient than proximal gradient descent (this has been noted in a related problem by \citealt{qin2010efficient}, in which they call this method ISTA-BC).

We cycle through the blocks (taking each $\beta_j\in\mathbb{R}^M$ as a
block), and on the $(k+1)$st pass, the update of block $j$ is given by
\begin{align}
\beta_j^{k + 1} = P_{t_{j}}^j(\beta^{k}) =: \arg\min_{z\in\mathbb{R}^{M}} \left\{\frac{1}{2t_j}\|z - (\beta_j^k - t_j\nabla_{\beta_j} L(\beta^k))\|_2^2 + \lambda\alpha\|z\|_2 + \lambda(1 - \alpha)\|z_{-1}\|_2\right\}
\end{align}
where $t_{j}$ is the step size for block $j$.

This proximal problem has essentially a closed-form solution and therefore can be solved very efficiently, as shown in \citet{jenatton2010proximal}.  Let $g_j = \beta_j^{k} - t_{j}\nabla_{\beta_j} L(\beta^k)$ and consider its dual,
\begin{align}
\min_{\gamma_1\in\mathbb{R}^{M}, \gamma_2\in\mathbb{R}^{M-1}}~&~\frac{1}{2}\|g_j - \gamma_1 - [0, \gamma_2^T]^T\|_2^2\\
s.t.~&~\|\gamma_1\|_2 \le t_{j}\lambda\alpha \qquad \|\gamma_2\|_2 \le t_{j}\lambda(1 - \alpha).
\end{align}

\citet{jenatton2010proximal} show that it can be solved in \emph{one pass} of block coordinate descent,
\begin{align}\label{eqn:dual_solution}
\hat\gamma_2~=~\Pi_{t_{j}\lambda(1 - \alpha)} (g_{j,-1}), \hat\gamma_1~=~\Pi_{t_{j}\lambda\alpha} (g_j - [0, \hat\gamma_2^T]^T)
\end{align}
where $\Pi_r(u)$ is the Euclidean projection of the vector $u$ onto
the $\ell_2$-ball of radius $r$. Having solved the dual, we get $P_t^j(\beta^k) = g_j - \hat\gamma_1 - [0, \hat\gamma_2^T]^T$.

We perform a backtracking line search until the following inequality holds to select $t_j$:
\begin{align}\label{eqn:backtracking}
L(\tilde{\beta}) \le L(\hat{\beta}) + \langle P_{t_{j}}^j(\beta^{k}) - \beta_j^{k}, \nabla_{\beta_j} L(\hat{\beta})\rangle + \frac{1}{2t_{j}}\|P_{t_{j}}^j(\beta^{k}) - \beta_j^{k}\|_2^2,
\end{align}
where
\begin{align*}
\hat{\beta}~=~[{\beta_1^{k + 1}}^T, ..., {\beta_{j-1}^{k + 1}}^T, {\beta_j^{k}}^T, ..., {\beta_p^{k}}^T]^T\text{~and~}\tilde{\beta}~=~[{\beta_1^{k + 1}}^T, ..., {\beta_{j-1}^{k + 1}}^T, P_{t_{j}}^j(\beta^{k})^T, ..., {\beta_p^{k}}^T]^T.
\end{align*}

Computing the Lipschitz constant, $C_j$, of $\nabla_{\beta_j} L$ is
relatively inexpensive since $X_{j}^TX_{j}$ is just an $M$-by-$M$
matrix, where $M$ is typically very small.  Thus, in practice we can
easily compute the minimum step size $1 / C_j$, to avoid the step size
$t_{j}$ going below this value.

Algorithm~\ref{algo:bcg} summarizes our BCGD method. We cycle through each block (Line 5), solve the proximal operator for that block (Line 7-10) and check if the step size is proper using a backtracking line search (Line 11-14). In the supplementary material, we show that the proposed algorithm
fits the framework of \citet{tseng2009coordinate} and therefore is guaranteed to converge.

\begin{algorithm}[t]
\caption{\em\small \model~via BCGD (for general loss $L$)}
\label{algo:bcg}
\begin{algorithmic}[1]
\STATE $t_j = t_j^0$, for $j = 1, ..., p$
\STATE $k \leftarrow 0$
\STATE $\beta^0 \leftarrow 0$
\WHILE{not converge}
    \FOR{$j = 1$ to $p$}
        \WHILE{\textbf{true}}
            \STATE $g_j \leftarrow \beta_j^{k} - t_{j}\nabla_{\beta_j} L(\beta^{k})$
            \STATE $\hat\gamma_2 \leftarrow \Pi_{t_{j}\lambda(1 - \alpha)} (g_{j,-1})$
            \STATE $\hat\gamma_1 \leftarrow \Pi_{t_{j}\lambda\alpha} (g_j - [0, \hat\gamma_2^T]^T)$
            \STATE $\beta_j^{k + 1} \leftarrow g_j - \hat\gamma_1 - [0, \hat\gamma_2^T]^T$
            \IF{Inequality~\eqref{eqn:backtracking} holds}
                \STATE \textbf{break}
            \ELSE
                \STATE $t_j \leftarrow \min(\eta t_j, 1 / C_j)$
            \ENDIF
        \ENDWHILE
    \ENDFOR
    \STATE $k \leftarrow k + 1$
\ENDWHILE
\end{algorithmic}
\end{algorithm}

\subsection{Block Coordinate Descent}
\label{sec:bcd}

Although Algorithm~\ref{algo:bcg} is applicable to any differentiable
loss function $L$, in the special case of a quadratic loss, a more
efficient solution strategy is available if we are willing to use an
orthonormal basis expansion, $Q_{j}\in\mathbb{R}^{n\times M}$, of each
feature $j$.  Thus, in this section we assume that the design
matrix $X=[Q_1:\cdots:Q_p]$ and that $Q_{j}^TQ_{j} = I_M$.  (We still
require, as throughout this paper, that the first column corresponds to the linear term.)  In block coordinate descent, we cycle through the $\beta_j$'s and for the $j$th block, solve the subproblem
\begin{align}
\min_{\beta_j} \frac{1}{2N} \|r_j - Q_{j}\beta_j\|_2^2 + \lambda\alpha\|\beta_j\|_2 + \lambda(1 - \alpha)\|\beta_{j, -1}\|_2
\end{align}
where $r_j = y - \sum_{k \neq j} Q_{k}\beta_k$ is the $j$th partial residual.

For general $Q_j$, this update would require an iterative approach,
but since $Q_{j}$ is an orthogonal matrix, we can equivalently minimize
\begin{align}
\min_{\beta_j} \frac{1}{2N} \|Q_{j}^Tr_j - \beta_j\|_2^2 + \lambda\alpha\|\beta_j\|_2 + \lambda(1 - \alpha)\|\beta_{j, -1}\|_2,
\end{align}
which we recognize as the optimization problem from BCGD, in which we apply the proximal operator to $Q_{j}^Tr_j$ instead of to $\beta_j - t_j\nabla_{\beta_j}L(\beta)$. Thus, by using BCD instead of BCGD we obviate the need to select a step size, making the optimization much more efficient.

To get the orthonormal basis, $Q_j$, we begin with a basis $X_j$ and
then perform a QR decomposition for
each block $j$ using the Gram-Schmidt process in order to preserve the
linear basis in the first column of each block.  Algorithm~\ref{algo:bcd} summarizes our BCD algorithm. 

\begin{algorithm}[t]
\caption{\em\small \model~via BCD (for quadratic loss $L$ and $Q_j\in\mathbb{R}^{n\times M}$ orthonormal)}
\label{algo:bcd}
\begin{algorithmic}[1]
\STATE $\beta^0 \leftarrow 0$
\WHILE{not converge}
    \FOR{$j = 1$ to $p$}
       \STATE $g_j \leftarrow Q_{j}^Tr_j$, where $r_j = y - \sum_{k \neq j} Q_{k}\beta_k$
       \STATE $\hat\gamma_2 \leftarrow \Pi_{N\lambda(1 - \alpha)} (g_{j,-1})$
       \STATE $\hat\gamma_1 \leftarrow \Pi_{N\lambda\alpha} (g_j - [0, \hat\gamma_2^T]^T)$
        \STATE $\beta_j \leftarrow g_j - \hat\gamma_1 - [0, \hat\gamma_2^T]^T$
    \ENDFOR
\ENDWHILE
\end{algorithmic}
\end{algorithm}

\section{Statistical Theory}
\label{sec:theory}

In this section, we seek a deeper understanding of the regimes in which \model~works well.  In Section \ref{sec:oracle-inequality}, we prove an upper bound on \model's prediction error in the regression setting.  This establishes \model~as a reliable method even when $p\gg N$ and gives insight into the factors that influence its prediction performance.  In Section \ref{sec:impr-over-spam}, we consider an asymptotic regime that highlights \model's potential statistical advantage over \spam.

\subsection{Oracle Inequality}
\label{sec:oracle-inequality}
Oracle inequalities have been proved for the hierarchical group lasso
(see, e.g., \citealt{chatterjee2012sparse}) that could be applied to
\model.  These results follow from the unified framework
of~\citet{negahban2012unified}, which gives both oracle inequalities
and recovery guarantees for a wide class of estimators based on
decomposable regularizers.  However, such results (and others of its
kind) make potentially strong (and unverifiable) assumptions on the
design matrix (e.g., the restricted isometry
property~\citealt{candes2007dantzig}, the compatibility
condition~\citealt{buhlmann2011statistics, van2007deterministic},
small coherence~\citealt{candes2009near}, etc.  See
\citealt{van2009conditions}).  Since \model's design matrix consists
of derived features, such assumptions become even more difficult to
interpret.  There is, however, a different class of oracle
inequalities, known as ``slow rates'', that make {\em no assumptions}
on the design matrix~\citep{dalalyan2014prediction}.  In addition, despite their name, these inequalities have been shown in some cases to give faster rates of convergence than the more standard ``fast rates'' \citep{dalalyan2014prediction}.  They are particularly useful 
in situations where the various assumptions made by the fast rate bounds are known not to apply or would be particularly difficult to interpret.

We derive in this section slow rate bounds for \model, thereby giving us an understanding of its statistical performance under no conditions on $X$. To the best of our knowledge, these are the first such slow rate bounds derived for the hierarchical group lasso.

Suppose
$$
y_i = f^0(x_i)+\epsilon_i\text{ for }i=1,\ldots, N,
$$
where $x_i\in\mathbb R^p$ is a vector of features, $\epsilon_i \sim N(0,\sigma^2 I_N)$ is a random
 vector of noise, and $f^0$ is the underlying function.  
Let $\hat\beta\in\arg\min_\beta\left\{\frac1{2N}\|y-\sum_jX_j\beta_j\|_2^2+\lambda\Omega^{\model}(\beta)\right\}$
denote a solution of \model~in which we have
 orthogonalized each feature's design matrix, i.e., that
 $\frac1{N}X_j^{T}X_j=I_M$ and let $\hat
 f=\sum_jX_j\hat\beta_j\in\mathbb R^N$
 denote the set of fitted values at these $N$ points.  The following
 theorem provides a slow rate for \model's prediction error. In an abuse of
 notation, we let $f^0$ denote the vector with $i$th element given by $f^0(x_i)$.

\begin{theorem}\label{thm:oracle_slow_rate}
If we take $\lambda\ge 2(1 + 2\sqrt{6})\sigma\sqrt{\log p / N}$ and
$\alpha=(1 + \sqrt{6}) / (1 + 2\sqrt{6})$, then
\begin{align}
\frac1{N}\|\hat f-f^0\|^2\le \min_{\beta\in\mathbb R^{p M}}\left\{\frac1{N}\| f^0-\sum_{j=1}^pX_j\beta_j\|_2^2+3\lambda\Omega^{\model}(\beta)\right\}
\end{align}
holds with probability at least $1-4/p$ as long as $\log p \ge M/8$.
\label{thm:oracle}
\end{theorem}
\begin{proof}
  See supplementary material.
\end{proof}
The above theorem makes no assumptions about the underlying function
$f^0$, and shows that \model~works well if there exists $\beta$ for
which $\sum_jX\beta_j$ is not too far from $f^0$ and
$\Omega^{\model}(\beta)$ is small. In the special case
that $f^0=\sum_jX_j\beta_j^0$ for some sparse vectors
$\beta_1^0,\ldots,\beta_p^0$, the result takes a simpler form.
We describe the sparsity of $\beta^0\in\mathbb{R}^{pM}$ in two senses: first, in terms
 of whether a feature is at all relevant,
 $\mathcal{S}_0=\{j:\beta^0_j\neq0\}$, and, second, in terms of
 whether the feature is nonlinear,
 $\mathcal{N}_0=\{j:\beta_{j,-1}^0\neq0\}$.  We also define the set of
 linear features, $\mathcal{L}_0=\mathcal{S}_0\setminus
 \mathcal{N}_0$.  Under this stronger assumption on $f^0$, the
 statement simplifies greatly, revealing the roles that $\mathcal{L}_0$ and
 $\mathcal{N}_0$ play in the performance of the estimator.

\begin{cor}\label{thm:slow_rate}
Suppose $f^0=\sum_jX_j\beta_j^0$ with $\mathcal{L}_0$ and
$\mathcal{N}_0$ defined as above.  If we take $\lambda\ge 2(1 + 2\sqrt{6})\sigma\sqrt{\log p / N}$ and
$\alpha=(1 + \sqrt{6}) / (1 + 2\sqrt{6})$, then
\begin{align}
\frac1{N}\|\hat f - f^0\|_2^2\le 3\lambda\left[\alpha\sum_{j\in \mathcal{L}_0}|\beta^0_{j1}|+\sum_{j\in \mathcal{N}_0}\|\beta^0_{j}\|_2\right]
\end{align}
holds with probability at least $1-4/p$ as long as $\log p \ge M/8$.
\label{thm:oracle}
\end{cor}
\begin{proof}
 See supplementary material.
\end{proof}
The above corollary implies that for suitably chosen $\lambda$, \model's prediction error converges to 0 in probability as
  $N\to\infty$ even if we let $p$ grow like $e^{N^\gamma}$ with $\gamma<1$ (assuming the sets $\L_0$ and $\N_0$ and the coefficients of features in this set remain fixed).  
 It also shows that our error grows linearly in the number of both linear and nonlinear features in the true model.  An interesting implication of the theorem is that $\alpha\approx0.58$ is a theoretically justifiable choice (although better performance may be achievable by tuning $\alpha$).

When all features are linear ($\N_0=\emptyset$), this result reduces to the traditional slow rate bound for the lasso (up to constants) \citep{rigollet2011exponential}.  Such bounds have been improved for the lasso by careful incorporation of the design matrix \citep{hebiri2013correlations}, and we speculate that similar improvement could be developed here.

\subsection{A Comparison to~\spam~When All Features Are Linear}
\label{sec:impr-over-spam}
We have seen in the previous section that \model~is consistent in prediction error in the presence of both linear
and nonlinear features even when $p\gg N$.  Since \spam~is a special
case of \model~(with $\alpha=1$), similar bounds follow easily.  A
natural question then is whether there is any {\em statistical} reason
to prefer \model~over \spam~(aside from the easier interpretation of a
GPLAM over a GAM when many features are linear).  Intuitively, it seems that when many features are truly linear, \spam~incurs variance for estimating nonlinear terms without a useful
reduction in bias; on the other hand, for \model~this would not happen, assuming a sufficiently large parameter for the nonlinear-specific penalty.  We make this intuition more precise by considering a scenario in which \model~is consistent whereas
\spam~is not, implying that there is indeed a statistical advantage to using \model.

Suppose that all $p$ features are linear with equal coefficients,
i.e., $\beta_{j}^0=be_1\in\mathbb R^M$,
and consider an asymptotic regime in which $p$ is fixed and $N=pM$ with $M,N\to\infty$ (note, Theorem~\ref{thm:oracle} does not apply since here $M>8\log p$).  We assume that all
features are orthogonal, i.e. $\frac1{N}X^TX=I_{N}$, so that \spam~
has a simple closed-form expression:
$$
\hat\beta_j^{\spam}=\left(1-\frac{\lambda}{\|\frac1{N}X_j^Ty\|_2}\right)_+\frac1{N}X_j^Ty.
$$
A several-line argument in the supplementary material establishes that
$$
\lim_{N,M\to\infty}\frac1{N}\|X\hat\beta^{\spam}-X\beta^0\|_2^2\ge \frac{b^2}{1/b^2+p/\sigma^2}>0.
$$
Thus, in the asymptotic regime in which one allows the number of basis vectors to grow linearly with $N$, one finds that the prediction error is bounded away from zero (regardless of the choice of $\lambda$).  Interestingly, this lower bound matches (up to constants) the upper bound for the group lasso in Theorem 8.1 of \citet{buhlmann2011statistics}.

By contrast, consider \model~with $\lambda\alpha=0$ and
$\lambda(1-\alpha)=\infty$ (e.g., take $\alpha\to0$ and
$\lambda=\alpha^{-1/2}$).  With this choice of parameters, it is apparent that
$\hat\beta^{\model}_j=X_{j1}^Ty\cdot e_1$ is simply the least squares solution on the correct set of variables and
$$
\frac1{N}\|X\hat\beta^{\model}-X\beta^0\|_2^2\to0.
$$
While assuming that the number of basis functions, $M$, is growing
linearly in $N$ is of course particularly unfavorable to
\spam~(indeed, \citealt{ravikumar2009spam} note that $M$ growing like
$N^{1/5}$ is a standard choice), it does serve to support the intuition
regarding the statistical cost of incorrectly assuming nonlinearity.
Indeed, in Section \ref{sec:synthetic} (Figure \ref{fig:simulation})
we show that there is a wide range of
scenarios in which \model~does in fact have better performance than \spam.

\section{Empirical Study}
\label{sec:experiments}
In this section, we report experimental results for \model. For all our experiments, we use cubic splines with 10 knots for basis expansion: $b(x_j) = [x_j, x_j^2, x_j^3, (x_j - x_{j1}^*)_+^3, ..., (x_j - x_{j10}^*)_+^3]$ (i.e., $M = 13$), where $(\cdot)_+$ represents the non-negative part and  the knot $x_{j\cdot}^*$ is chosen from quantiles in the sample. We choose the best parameters on a held-out validation set and report model performance on a test set. The code is available at \url{https://github.com/yinlou/mltk}.

\subsection{Synthetic Problem}
\label{sec:synthetic}
To illustrate the use of \model, we generate $N=2,000$ points from the model  $y = 2\sin(2x_1) + x_2^2 + \exp(-x_3)  + x_4 - 3x_5 + 2.5x_6 + 10x_7 + 2x_8 - 7x_9 + 5x_{10} + \epsilon$, where $\epsilon\sim\mathcal{N}(0, 1)$. In this experiment, we create an additional 90 random features (so $p=100$). The first 3 nonlinear features are generated uniformly in $[-2.5, 2.5]$, and all other features are uniformly in $[0, 1]$.

\begin{figure*}[t]
\begin{center}
\begin{tabular}{ccccc}
    \includegraphics[height=35mm]{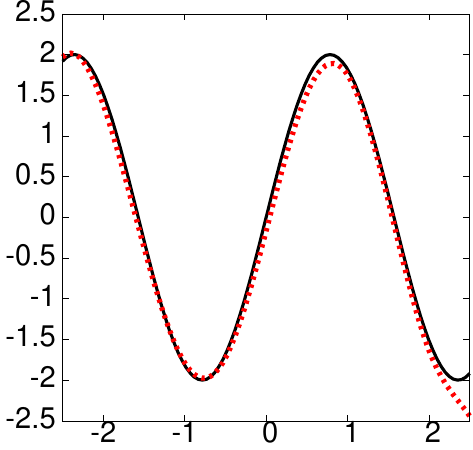} & \includegraphics[height=35mm]{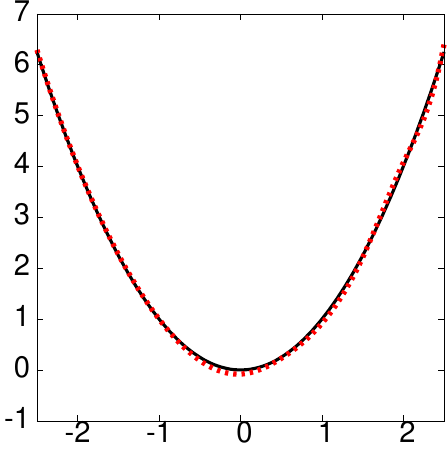} & \includegraphics[height=35mm]{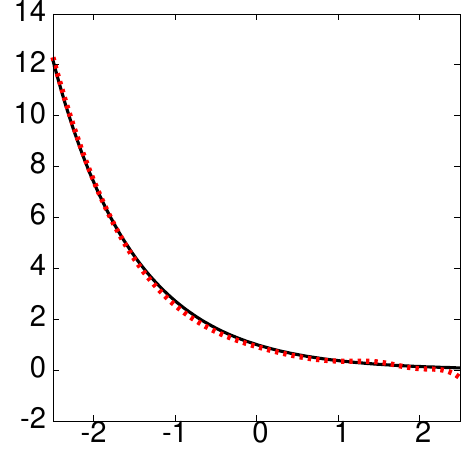} & \includegraphics[height=35mm]{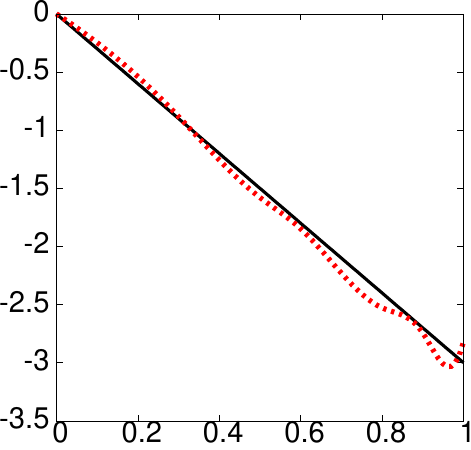}\\
    (a) $f_1$ & (b) $f_2$ & (c) $f_3$ & (d) $f_5$ in \spam
\end{tabular}
\end{center}
\caption{\em\small Estimated component functions (in dashed red) and true functions (in solid black)
 for synthetic dataset in Section~\ref{sec:synthetic}. Nonlinear estimates of \model~are illustrated in (a) - (c). Figure (d) shows \spam's estimate of $f_5$.  The wiggliness of \spam's estimate is because it does not penalize toward exact linearity as does \model.} 
\label{fig:synth}
\end{figure*}

We plot estimated components in Figure~\ref{fig:synth}. Figure~\ref{fig:synth} (a), (b), and
(c) visualize the nonlinear components in \model~for $f_1$, $f_2$, and
$f_3$, respectively. We can see that the estimated shape of the component
function is very close to the true functions. On this sample,
\model~perfectly recovers which features are linear and nonlinear
while \spam~treats all selected features as nonlinear. For
coefficients on linear components in \model, the relative error is
less than 0.1\%. For comparison, we visualize $f_5$ in \spam~in red in
Figure~\ref{fig:synth}(d). The ground truth linear function is plotted
in black. We can see that the component itself is not exactly linear
and that it overfits to the noise. 

\subsection{Simulation: Effect of $|\N|$ and $|\L|$ on~\model,~\spam, and the Lasso}
\label{sec:simulation}
In this section, we perform a large-scale simulation to gain deeper insights into the lasso~\citep{tibshirani1996regression}, \model~and \spam~\citep{ravikumar2009spam}. We consider the models with $p = 100$ features: $y = \sum_{j\in\mathcal{L}} x_j + \sum_{j\in\mathcal{N}} \sin(x_j) + \epsilon$, where $\epsilon\sim\mathcal{N}(0, 1)$, $\mathcal{L}\cap\mathcal{N} = \emptyset$. We use two parameters $\gamma$ and $\delta$ to control the cardinality of $\mathcal{L}$ and $\mathcal{N}$, respectively, i.e., $|\mathcal{L}| = \gamma p$ and $|\mathcal{N}| = \delta p$. We choose $\gamma = 0.0, 0.1, ..., 1.0$ and $\delta = 0.0, 0.1, ..., 1.0$ ($\gamma + \delta \le 1$) and for each $(\gamma, \delta)$ pair, we generate 10 different models. For each of those models, we generate $N_{train}$ points for training, $N_{valid}$ points for validation and $N_{test}$ points for testing. We consider three different settings of simulations, $(N_{train}, N_{valid}, N_{test}) = (200, 100, 100)$, $(500, 100, 100), (1000, $ $200, 200)$. 

For \model, we consider $\alpha \in \{0.05$, $0.1, ..., 0.95, 1\}$. For all of the three methods, we consider the full regularization path with 100 $\lambda$s spaced evenly on a log scale. In our experiments, this range is sufficient to find the optimal model structure. Best model parameters are chosen using the validation set, and model accuracy is evaluated as the average RMSE of 10 models on test sets. For all successive experiments, we use the same method to choose parameters.

\begin{figure*}[t]
\begin{center}
\begin{tabular}{cccc}
  & $N_{train} = 200$ & $N_{train} = 500$ & $N_{train} = 1000$ \\
\rotatebox{90}{~~~~~~Nonlinear Terms (\%)} & \includegraphics[width=0.28\textwidth]{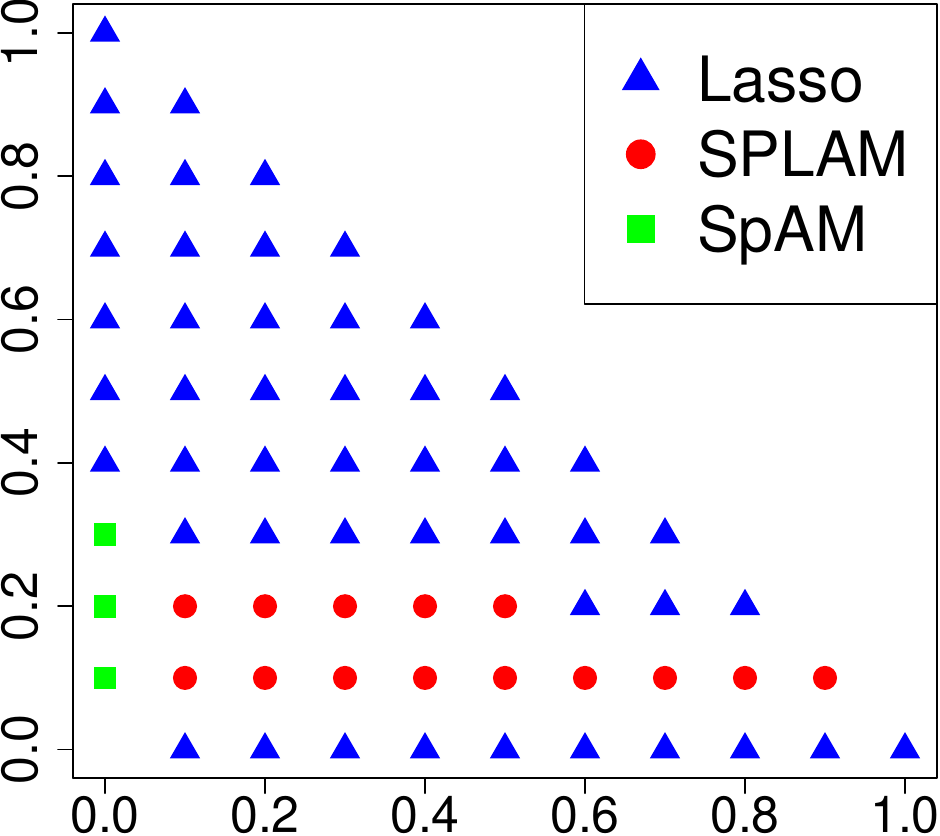} & \includegraphics[width=0.28\textwidth]{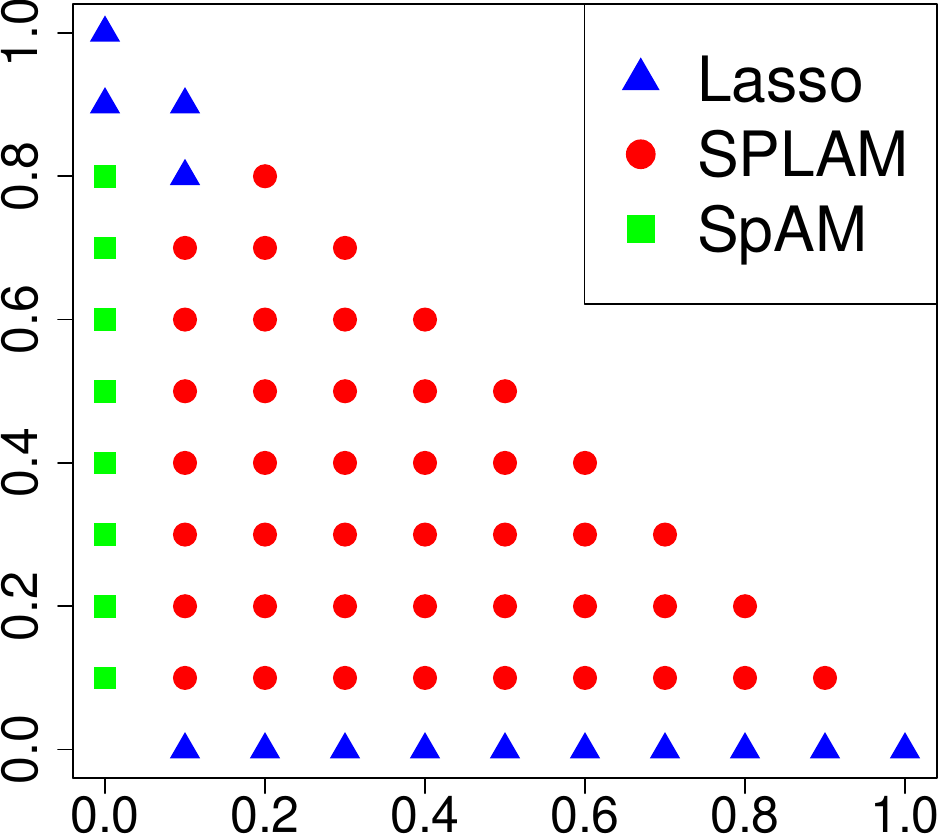} & \includegraphics[width=0.28\textwidth]{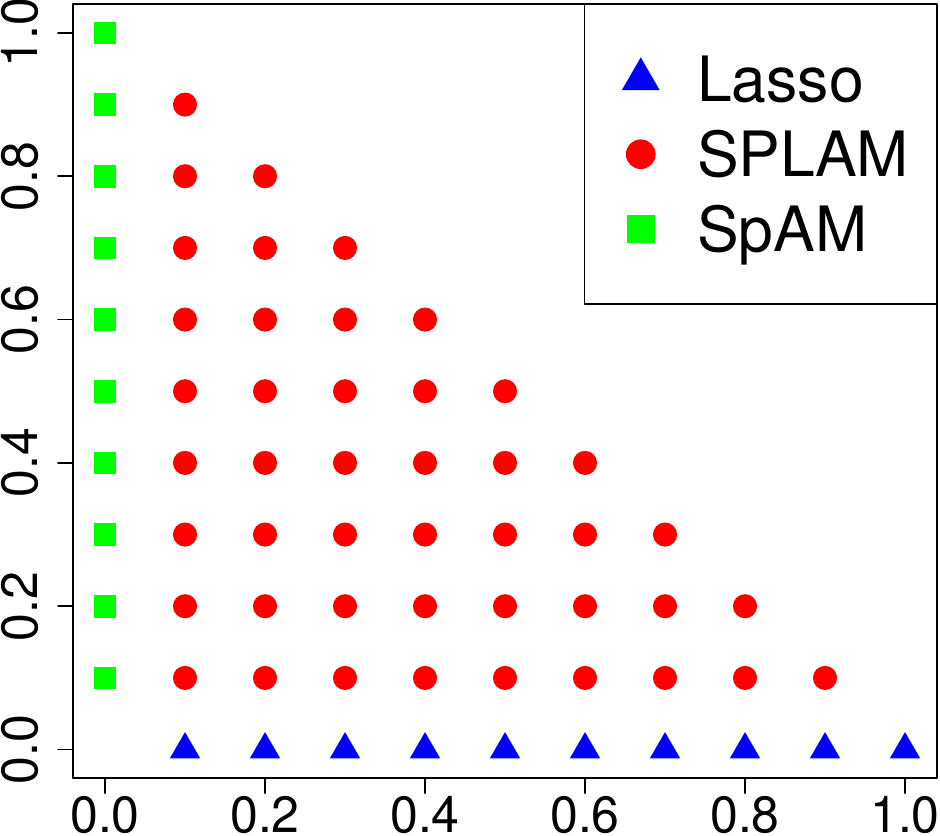} \\
& \multicolumn{3}{c}{Linear Terms (\%)}  \\
\end{tabular}
\caption{\em\small Results of simulation in Section~\ref{sec:simulation}. Each point shows the winning model for a given $(\gamma, \delta)$ pair.}
\label{fig:simulation}
\end{center}
\end{figure*}

Figure~\ref{fig:simulation} shows the results for the simulations. For each $(\gamma, \delta)$ pair, we plot which model wins on average. It is clear that for pure linear ($\delta = 0$) and pure additive ($\gamma = 0$), \model~has no advantage over the lasso or \spam.

When $N_{train} = 200$, both \spam~and \model~overfit significantly
when there are a lot of nonlinear components, since a large number of
nonlinear components leads to a large parameter space and this small
amount of data is not enough for reliable estimates. The lasso wins
over the other methods on most of the cases by trading off variance
for bias. \model~outperforms the lasso in regimes with a mixture of
small nonlinear components and a reasonable number of linear
components. When we increase the number of data points in the training
set ($N_{train} = 500$), more reliable estimates can be obtained so
\spam~wins back from the lasso on cases where we only have nonlinear
components (the lasso, having only linear features is incapable of
estimating the nonlinear effects present in the data). Interestingly,
the lasso is still the best when there are a lot of nonlinear
components since in this regime the data cannot support the large
number of parameters for reliable estimation. \model, however, is the
winner in most settings since it can better model the mixture of linear and nonlinear effects when there are enough data. Not surprisingly, when there are enough data ($N_{train} = 1000$), \model~dominates all cases in which both linear and nonlinear components are present. This is because the lasso is unable to model nonlinear effects and because \spam~has higher variance than \model~without being less biased. 

\begin{table}[t]
    \centering
    \caption{\em \small Size (total number of points) and dimension of datasets and performance
      of methods. For each method, we report the mean error (standard
      deviation in parentheses) and how many of the selected features
      are nonlinear, written as $|\hat{\mathcal N}|/(|\hat{\mathcal N}|+|\hat{\mathcal L}|)$. Bold indicates the method with the mean lowest error.}
    \label{tbl:real}
    {\footnotesize
    \begin{tabular}{|c|c|c|c||c|c|c|c|c|c|}
    \hline
    {\bf Dataset} & {\bf Size} & {\bf Test} & $\mathbf p$ &
    \multicolumn{2}{|c|}{{\bf Lasso}} & \multicolumn{2}{|c|}{{\bf \model}}
    & \multicolumn{2}{|c|}{{\bf SpAM}}\\
    \hline
    Spambase & 4601 & 920 & 57 & 7.38 (0.87) & 0/52 & \textBF{6.57 (0.91)} & 38/41 & 6.93 (0.96) & 38/38\\
    Gisette & 6000 & 1200 & 5000 & 2.43 (0.54) & 0/717 & \textBF{2.18 (0.59)} & 10/733 & 2.62 (0.51) & 1364/1461\\
    RCV1 & 697641 & 418584 & 47236 & 2.71 (0.02) & 0/7652 & \textBF{2.67 (0.01)}  & 4/5293 & 3.18 (0.04) & 4498/4683\\
    Pantheon & 62849 & 37709 & 10000 & 9.34 (0.12) & 0/1859 & \textBF{9.22 (0.16)} & 27/1853 & 12.71 (0.19) & 2770/2770\\
    \hline
    \end{tabular}
    }
\end{table}

\subsection{Real Problems}
\label{sec:real}
In this section, we report experimental results on several real classification problems. We choose datasets with different dimensions and sizes. Table~\ref{tbl:real} summarizes the characteristics of the datasets and presents the predictive performance of the lasso, \model, and \spam~with means and standard deviations on 5 trials. Best parameters are chosen on a held-out validation set on each trial. We note that \model~outperforms the lasso and \spam~on most of the trials. We also list the number of selected nonlinear features and total number of selected features in Table~\ref{tbl:real}. In our experiments, features are forced to be linear if they have less than 10 unique values. This is the case with \spam~on Gisette and RCV1 dataset.

\textbf{Email Classification}. We first consider a classification problem for detecting spam emails (Spambase)~\citep{hastie2001elements}. The features include statistics of particular words or letters in an email. We see from Table~\ref{tbl:real} that by allowing features to act nonlinearly, the error of \spam~decreases substantially compared to the lasso. However, by explicitly setting some of the variables to stay linear, \model~further outperforms \spam.

\textbf{Handwritten Digit Recognition}. We use the ``Gisette" dataset
constructed from NIPS 2003 feature selection
challenge (\url{http://www.nipsfsc.ecs.soton.ac.uk/}). The
problem is to separate the highly confusible digits ``4" and
``9". Features in this dataset contain pixels that are necessary to
distinguish ``4" from ``9", but higher order features from those
pixels as well as random noise features are also added. Since the dimension of this dataset is significantly larger than the previous dataset while the size of the dataset remains similar, we expect \spam~to overfit as shown in Table~\ref{tbl:real}. In our experiments, the best \spam~model that we can get is always worse than the lasso on each cross validation set while our \model~outperforms the lasso on most cross validation sets. Our \model~selects about 733 features, with about 10 of them being nonlinear and the rest being linear, while the lasso selects 717 features. This confirms that by allowing a small number of features to act nonlinearly, we can further improve the classification performance, and yet by setting most of features as linear, we effectively control the complexity and avoid overfitting.

\begin{figure*}[t]
\begin{center}
\begin{tabular}{ccc}
    \includegraphics[width=0.26\textwidth]{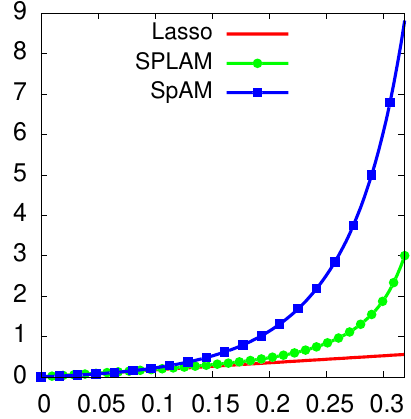} & \includegraphics[width=0.26\textwidth]{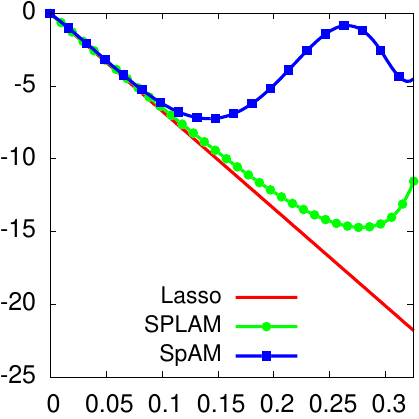} & \includegraphics[width=0.26\textwidth]{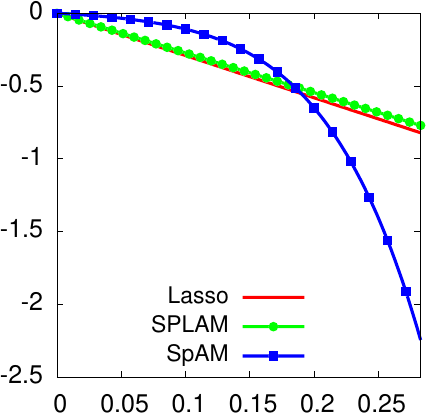}\\
    (a)  & (b) & (c)
\end{tabular}
\end{center}
\caption{\em\small Estimated component functions on RCV1 dataset.} 
\label{fig:rcv1}
\end{figure*}

\textbf{Text Categorization}. Text categorization is an important task
for many natural language processing applications. We use Reuters
Corpus Volume I (RCV1) which involves binary
classification~\citep{lewis2004rcv1}. From Table~\ref{tbl:real}  we see that
\model~outperforms the others. This suggests that in high dimensions, there is extra accuracy that can be
obtained over the lasso if some features are allowed to be nonlinearly
transformed. However, if all features are allowed to be nonlinearly
transformed, such as in \spam, the model will overfit and a suboptimal
model is obtained. On average \model~selects 5293 features with 4 of
them being nonlinear, and the lasso selects 7652
features. Figure~\ref{fig:rcv1} visualizes some
components in \model, \spam, and the lasso. Each feature in this
dataset relates to the (normalized) frequency of some word in a
document. In general, we find that \model~strikes a compromise between
the lasso and \spam~fits, using nonlinearity only sparingly. Figure~\ref{fig:rcv1} (a) shows a feature which is
identified as nonlinear by both \model~and \spam. Notice that both \model~and
\spam~find a model with similar shape. In Figure~\ref{fig:rcv1} (b),
we show a feature that appears to be nearly linear in \model. In this case,
\spam~oscillates in a way that might suggest that it is overfitting to
the noise; by contrast, \model's fit is mostly linear, only exhibiting nonlinear effects when the
input gets large. Notice \model~and the lasso almost agree with each
other when the feature value is small. Finally, in Figure~\ref{fig:rcv1}
(c) we show a feature that is
linear in both \model~and the lasso. \model's estimation of the slope is very similar to that of the
lasso.  By contrast, \spam~treats this as a nonlinear effect. In light of \model's better misclassification rate in this
data set, one might suppose that \spam's more pronounced deviations
from linearity are in fact cases of overfitting to noise.  Likewise,
\model's better misclassification rate compared to the lasso
suggests that the latter may be failing to model some of the nonlinear effects.  

\textbf{Image Matching}. Many new computer vision applications are
utilizing large-scale datasets of places derived from the many
billions of photos on the Web. Image matching is a central procedure
to those applications which tests whether two images are geometrically
consistent~\citep{lou2012matchminer}. Since image matching is an
expensive procedure, image pairs are usually pre-filtered with a
lightweight classification procedure to estimate whether two images
are likely to pass the geometric verification. In this study, we use
the ``Pantheon" dataset in~\citet{lou2012matchminer}. Each image is
represented using bag-of-visual-words model with a vocabulary of
10,000 visual words. From Table~\ref{tbl:real} we again observe that
by carefully controlling the complexity of the model, \model~has
better predictive performance than the other two models. On average
the lasso selects 1859 features while
\model~selects 1853 features with only 27 of them being nonlinear.  

\section{Conclusion}
\label{sec:conclusion}

In this paper, we introduce the sparse partially linear additive model that performs two model-selection tasks within a single convex hierarchical sparse regularization problem.  This formulation permits an efficient optimization algorithm, making the GPLAM framework practical in machine learning settings. We develop an oracle inequality of \model~that makes no assumptions on the design matrix, and we study \model's advantage over \spam~when many of the features in the model are linear. Our thorough experiments demonstrate that \model~can effectively and accurately find relevant components with proper complexity and is very competitive for additive modeling. In particular, on large-scale, high-dimensional datasets, \model~improves accuracy over the popular linear model by allowing a small set of features to have a nonlinear effect.

\section{Acknowledgments}
\label{sec:acknowledgments}

The authors gratefully acknowledge Johannes Lederer for useful discussions. This research has been supported by the NSF under Grants IIS-0911036 and IIS-1012593.  

\appendix
\section{Convergence of Algorithm~\ref{algo:bcg} in the Main Paper}
\label{sec:convergence}

We show that Algorithm~\ref{algo:bcg} fits the
general BCGD framework~\citep{tseng2009coordinate} and therefore the
global convergence is guaranteed.  We include this supplementary material for
completeness although a similar convergence result for the group lasso
is shown in~\citet{qin2010efficient}. We first briefly review the general BCGD algorithm.

Let $F(\beta) = L(\beta) + h(\beta)$, where $h(\beta) = \lambda\Omega^{\model}(\beta)$. At each iteration $k$, for block $j$, choose a symmetric positive definite matrix $H^k$, and compute the search direction.
\begin{align}\label{eqn:bcgd_step}
d^k = \arg\min_{d}\{\nabla L(\beta^k)^Td + \frac{1}{2}d^TH^kd + h(\beta^k + d)\}
\end{align}
where $\forall i \not\in \mathcal{G}_j, d_i = 0$. Then a step size $\alpha^k>0$ is chosen so that the following Armijo rule is satisfied,
\begin{align}\label{inequality:armijo}
F(\beta^k + \alpha^kd^k) \le F(\beta^k) + \alpha^k\sigma\Delta^k
\end{align}
where $0 < \sigma < 1, 0 \le \gamma < 1$, and
\begin{align}
\Delta^k \overset{\text{def}}{=} \nabla L(\beta^k)^Td^k + \gamma {d^k}^TH^kd^d + h(\beta^k + d^k) - h(\beta^k),
\end{align}
Once the step size $\alpha^k$ is determined, update $\beta^{k + 1} = \beta^k + \alpha^kd^k$.

Theorem 2 in~\cite{tseng2009coordinate} guarantees the global convergence when $\overline{\theta}I \succeq H^k \succeq \underline{\theta}I$, $0 < \underline{\theta} \le \overline{\theta}$.

\begin{theorem}
Algorithm~\ref{algo:bcg} fits the general BCGD framework of~\cite{tseng2009coordinate}. The global convergence is guaranteed and Algorithm~\ref{algo:bcg} converges Q-linearly.
\end{theorem}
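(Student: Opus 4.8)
The plan is to exhibit Algorithm~\ref{algo:bcg} as a concrete instance of the general block coordinate gradient descent (BCGD) scheme of~\cite{tseng2009coordinate} applied to $F(\beta)=L(\beta)+h(\beta)$ with $h(\beta)=\lambda\Omega^{\model}(\beta)$, and then simply invoke their Theorem~2. Two structural facts make the framework applicable at all: $L$ is convex and smooth with a block-Lipschitz gradient, and the penalty $\Omega^{\model}(\beta)=\sum_{j=1}^p[\alpha\|\beta_j\|_2+(1-\alpha)\|\beta_{j,-1}\|_2]$ is convex and \emph{block-separable} across the groups $\mathcal{G}_1,\dots,\mathcal{G}_p$ --- exactly the coordinate structure the Gauss--Seidel analysis in~\cite{tseng2009coordinate} requires of the nonsmooth part. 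I would then verify in turn: (i) that the per-block proximal update coincides with the search direction~\eqref{eqn:bcgd_step} for an explicit $H^k$; (ii) that this $H^k$ has uniformly bounded eigenvalues; and (iii) that the backtracking test~\eqref{eqn:backtracking} forces the Armijo rule~\eqref{inequality:armijo} to hold with unit step.

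For (i), writing $d=z-\beta_j^k$ and completing the square in the objective of $P_{t_j}^j$ recasts that subproblem, up to an additive constant, as $\arg\min_d\{\nabla_j L(\beta^k)^Td+\tfrac{1}{2t_j}\|d\|_2^2+\lambda\alpha\|\beta_j^k+d\|_2+\lambda(1-\alpha)\|(\beta_j^k+d)_{-1}\|_2\}$. Because $\Omega^{\model}$ is block-separable, only block $j$'s term of $h(\beta^k+d)$ varies with $d$, so this is precisely~\eqref{eqn:bcgd_step} with the direction constrained to $\mathcal{G}_j$ and $H^k=t_j^{-1}I$ on that block. For (ii), the descent lemma guarantees~\eqref{eqn:backtracking} as soon as $t_j\le 1/C_j$, where $C_j$ is the Lipschitz constant of $\nabla_j L$ (cheaply available since $X_j^TX_j$ is $M\times M$); hence the inner backtracking loop terminates with $t_j$ confined to a fixed compact interval bounded away from $0$, giving $\underline\theta I\preceq H^k\preceq\overline\theta I$ with, say, $\underline\theta=\min_j 1/t_j^0$ and $\overline\theta$ a fixed multiple of $\max_j C_j$, uniformly over all iterations and blocks.

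For (iii), I would exploit the fact that $d^k$ minimizes the strongly convex model in~\eqref{eqn:bcgd_step}: comparing its value at $d^k$ with its value at $0$ yields $B:=\nabla L(\beta^k)^Td^k+h(\beta^k+d^k)-h(\beta^k)\le-(d^k)^TH^kd^k$, the strong-convexity modulus matching $1/t_j$ on the active block. Adding $h(\beta^k+d^k)$ to both sides of~\eqref{eqn:backtracking} gives $F(\beta^k+d^k)\le F(\beta^k)+B+\tfrac12(d^k)^TH^kd^k$; combining the two inequalities shows $F(\beta^k+d^k)\le F(\beta^k)+\sigma\Delta^k$ with $\alpha^k=1$ for, e.g., $\gamma=0$ and any $\sigma\in(0,\tfrac12]$. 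Thus the unit step always passes the Armijo test, the cyclic sweep over $j=1,\dots,p$ realizes the Gauss--Seidel block rule, and the eigenvalue bounds from (ii) let me invoke Theorem~2 of~\cite{tseng2009coordinate} to conclude global convergence to a stationary (hence, by convexity, global) minimizer.

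The hard part will be the Q-linear rate rather than global convergence. Global convergence needs only the bounded-$H^k$ and Armijo conditions established above, but the linear rate in~\cite{tseng2009coordinate} is contingent on an additional local Lipschitzian error-bound assumption on $F$. I would discharge this by observing that our losses have the composite form $L(\beta)=g(X\beta)$ with $g$ (quadratic or logistic) strongly convex with a Hessian bounded on bounded sets, and that $\Omega^{\model}$ is a finite sum of Euclidean-norm (second-order-cone) terms; this is exactly the structure for which~\cite{tseng2009coordinate} establish the required error bound, so their linear-convergence theorem then applies. Verifying that this error-bound hypothesis genuinely holds for $\Omega^{\model}$ --- as opposed to the largely mechanical matching of the algorithm to the BCGD template --- is where the substantive work lies.
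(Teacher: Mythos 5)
Your reduction to the BCGD template is essentially the same as the paper's: both set $H^k=t_j^{-1}I$ on block $j$ so that Equation~\ref{eqn:bcgd_step} coincides with the proximal subproblem $P_{t_j}^j$ up to constants, both bound the eigenvalues of $H^k$ via the initial step sizes and the block Lipschitz constants $C_j$, and both argue that the backtracking test (Inequality~\ref{eqn:backtracking}) forces the Armijo rule (Inequality~\ref{inequality:armijo}) to hold with unit step --- the paper takes $\sigma=1,\gamma=\tfrac12$ and invokes Lemma~1 of \cite{tseng2009coordinate} to conclude $\Delta^k\le0$, while you derive the same descent inequality directly from optimality of $d^k$ in the strongly convex model and settle on $\gamma=0$, $\sigma\le\tfrac12$; these are equivalent bookkeeping. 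Where you genuinely differ is on the Q-linear rate, and there you are more careful than the paper: Theorem~2 of \cite{tseng2009coordinate} gives global convergence under the conditions above, but its linear-rate conclusion additionally requires a local Lipschitzian error-bound assumption on $F$, which the paper never states or verifies. Your proposed discharge of that assumption is, however, not as routine as you suggest: the sufficient conditions in \cite{tseng2009coordinate} for the error bound in the composite, non-strongly-convex case (least squares with $pM>N$, or logistic loss) require the nonsmooth part to be \emph{polyhedral}, and $\Omega^{\model}$, being a sum of Euclidean norms, is not; establishing the error bound for group-norm penalties needs results beyond that reference. So your argument matches the paper's on global convergence and is more honest about where the substantive work for the Q-linear claim lies, but that claim is not actually proved in either version.
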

\begin{proof}
First, for block $j$, setting $H^k = \frac{1}{t_j}I$,
Equation~\eqref{eqn:bcgd_step} is equivalent to our proximal operator
for block $j$ after ignoring constants. Next, notice that when
$\alpha^k = 1, \sigma = 1$, and $\gamma = \frac{1}{2}$, the Armijo
rule becomes our backtracking line search step in Equation~\eqref{eqn:backtracking} in the main paper. That is, the effort of choosing step size is shifted to finding $H^k$. Besides, Lemma 1 in~\cite{tseng2009coordinate} suggests $\nabla L(\beta^k)^Td^k + {d^k}^TH^kd^k + h(\beta^k + d^k) - h(\beta^k) \le 0$. Since $H^k \succ 0$, with $\gamma = \frac{1}{2}$, we can easily see $\Delta^k \le 0$ whenever $d^k \neq 0$, which means if the Armijo rule holds for $\sigma = 1$, it must also hold for $\sigma < 1$. Finally, we show that $\overline{\theta}I \succeq H^k \succeq \underline{\theta}I$. Assume the initial step size is $t_j^0$, this is true when $\overline{\theta} = \max\{C_j, 1/t_j^0\}$ and $\underline{\theta} = \min\{C_j, 1/t_j^0\}$. Thus, according to Theorem 2 in~\cite{tseng2009coordinate}, Algorithm 1 converges Q-linearly.
\end{proof}

\section{Practical Issues}
\label{sec:practical_issues}

\subsection{Active Set Strategy}
\label{sec:active_set}
We employ the widely used active set strategy~\citep{friedman2010regularization, krishnapuram2005sparse, meier2008group}. After a complete cycle through all the variables, we iterate only on the active set till convergence. If another complete cycle does not change the active set, we are done, otherwise the process is repeated.

\subsection{Regularization Path}
\label{sec:parameter}
Similar to \texttt{glmnet}~\citep{friedman2010regularization}, the
optimization of \model~also uses two parameters, $\lambda$ and
$\alpha$, which usually involves a grid search on values of $(\lambda,
\alpha)$ pairs. As noted in the main paper, for each value of
$\alpha$, we start at the smallest value $\lambda_{max}$ for which
$\beta_j = 0$ for $j = 1, ..., p$.  We then decrease $\lambda$ from
$\lambda_{max}$ exponentially.  To find $\lambda_{max}$, we note that
for all $\lambda \ge \lambda_{init} = \max_j\frac{\|\nabla_{\beta_j}
  L(0)\|_2}{\alpha}$, the zero vector is the solution to our optimization problem. We perform a binary search to find $\lambda_{max}$. As described in Algorithm~\ref{algo:lambda}, we start with $\lambda_{init}$ (Line 1) and effectively shrink the interval $[\lambda_l, \lambda_h]$ (Line 3 - 8) to locate $\lambda_{max}$.

\begin{algorithm}[t]
\caption{Finding $\lambda_{max}$}
\label{algo:lambda}
\begin{algorithmic}[1]
\STATE $\lambda_h \leftarrow \max_j\frac{\|\nabla_j L(0)\|_2}{\alpha}$
\STATE $\lambda_l \leftarrow 0$
\WHILE{$\lambda_h - \lambda_l \ge \epsilon$}
    \STATE $\lambda \leftarrow \frac{\lambda_h + \lambda_l}{2}$
    \IF{$\forall j, P_t^j(0) = 0$}
        \STATE $\lambda_h \leftarrow \lambda$
    \ELSE
        \STATE $\lambda_l \leftarrow \lambda$
    \ENDIF
\ENDWHILE
\STATE $\lambda_{max} = \lambda_h$
\end{algorithmic}
\end{algorithm}

\section{Proof of Theorem~\ref{thm:oracle_slow_rate} and Corollary~\ref{thm:slow_rate} in the Main Paper}
\label{appendix:thm1_proof}

\begin{proof}[Proof of Theorem~\ref{thm:oracle_slow_rate}]
By definition of $\hat{\beta}$,
\begin{align*}
\frac1{2N}\|y-X\hat\beta\|_2^2+\lambda\Omega^{\model}(\hat\beta)\le \frac1{2N}\|y-X\beta\|_2^2+\lambda\Omega^{\model}(\beta)
\end{align*}
holds for any $\beta\in\mathbb R^p$.  Some algebra (recalling that $y=f^0+\epsilon$ and writing $\hat\Delta=\hat\beta-\beta$) leads to
\begin{align}
\frac1{2N}\|X\hat\beta-f^0\|_2^2 + \lambda\Omega^{\model}(\hat\beta)\le \frac1{2N}\|X\beta-f^0\|_2^2+\frac1{N}\epsilon^TX\hat\Delta+\lambda\Omega^{\model}(\beta)\label{splam:eq:bound}
\end{align}
Define the empirical process as,
\begin{align}
V_N(\hat\Delta) = \frac1{N}\epsilon^TX\hat\Delta = \frac1{\sqrt{N}}\sum_{j = 1}^p V_j^T\hat\Delta_j
\end{align}
where $V_j = \frac{1}{\sqrt{N}}X_j^T\epsilon \in \mathbb{R}^M$.

Now we bound the empirical process. First we notice that,
\begin{align}
|V_j^T\hat\Delta_j| \le& \frac1{2} \left[ |V_j^T\hat\Delta_j| + |V_{j1}\hat\Delta_{j1}| + |V_{j, -1}^T\hat\Delta_{j, -1}| \right]\\
\le& \frac1{2}\left[ \|V_j\|_2\|\hat\Delta_j\|_2 + |V_{j1}||\hat\Delta_{j1}| + \|V_{j, -1}\|_2\|\hat\Delta_{j, -1}\|_2\right]
\end{align}

Thus $|V_N(\hat\Delta)|$ can be bounded as follows.
\begin{align}
|V_N(\hat\Delta)| \le& \frac1{\sqrt{N}}\sum_{j = 1}^p |V_j^T\hat\Delta_j|\\
\le& \frac{1}{2\sqrt{N}}\left(\max_j \|V_j\|_2\|\hat\Delta\|_{2,1} + \max_j |V_{j1}|\sum_j|\hat\Delta_{j1}| + \max_j\|V_{j, -1}\|_2\|\hat\Delta_{\cdot, -1}\|_{2,1} \right)\\
\le& \frac{1}{2\sqrt{N}}\left[(\max_j\|V_j\|_2 + \max_j|V_{j1}|)\|\hat\Delta\|_{2, 1} + \max_j\|V_{j, -1}\|_{2}\|\hat\Delta_{\cdot, -1}\|_{2,1}\right]
\end{align}

Observing that $V_j \sim N(0, \sigma^2I_M)$, we have $\|V_j\|_2^2 \sim \sigma^2\chi^2_M$. Thus, by Lemma 6.2 and 8.1 of~\cite{buhlmann2011statistics}, we have
\begin{align}
P\left( \frac{\max_j |V_{j1}|}{2\sqrt{N}} > \nu_1 \right) \le& 2e^{-x}\\
P\left( \frac{\max_j \|V_{j}\|_2}{2\sqrt{N}} > \nu_2 \right) \le& e^{-x}\\
\end{align}
where,
\begin{align}
\nu_1^2 =& \frac{\sigma^2}{2N}(x + \log p)\\
\nu_2^2 =& \frac{\sigma^2}{4N}\left[M + \sqrt{4M(x + \log p)} + 4(x + \log p)\right]
\end{align}

Thus, we have
\begin{align}
P\left(\frac{\max_j\|V_j\|_2 + \max_j|V_{j1}|}{2\sqrt{N}} > \nu_1 + \nu_2\right) \le& 3e^{-x}\\
P\left(\frac{\max_j\|V_{j, -1}\|_2}{2\sqrt{N}} > \nu_2\right) \le& e^{-x}
\end{align}

Therefore (with union bound),
\begin{align}
P\left(|V_N(\hat\Delta)| \le \left[ (\nu_1 + \nu_2)\|\hat\Delta\|_{2, 1} + \nu_2\|\hat\Delta_{\cdot, -1}\|_{2,1}\right]\right) \ge& 1 - (e^{-x} + 3e^{-x})\\
=& 1 - 4e^{-x}
\end{align}

Thus, by~\eqref{splam:eq:bound} we have with probability at least $1 - 4e^{-x}$ that
\begin{align}
\frac1{2N}\|X\hat\beta-f^0\|_2^2 + \lambda\Omega^{\model}(\hat\beta)\le \frac1{2N}\|X\beta-f^0\|_2^2+(\nu_1 + \nu_2)\|\hat\Delta\|_{2,1} + \nu_2\|\hat\Delta_{\cdot, -1}\|_{2,1}+\lambda\Omega^{\model}(\beta)
\label{splam:eq:common_bound}
\end{align}

Let $\lambda_1 = \lambda\alpha$ and $\lambda_2 = \lambda(1 - \alpha)$, we can take $\lambda_1 = 2(\nu_1 + \nu_2)$ and $\lambda_2 = 2\nu_2$. Thus, \eqref{splam:eq:common_bound} implies
\begin{align}
\frac1{2N}\|X\hat\beta-f^0\|_2^2 &-\frac1{2N}\|X\beta-f^0\|_2^2\le 
(\lambda/2)\Omega^{\model}(\hat\Delta) -
\lambda\Omega^{\model}(\hat\beta) + \lambda\Omega^{\model}(\beta)\\
\le& (\lambda/2)\left[\Omega^{\model}(\hat\beta)
+\Omega^{\model}(\beta)\right] -
\lambda\Omega^{\model}(\hat\beta) + \lambda\Omega^{\model}(\beta)\\
=& (3\lambda/2)\Omega^{\model}(\beta)-(\lambda/2)\Omega^{\model}(\hat\beta)
\end{align}
by the triangle inequality.  Thus,
\begin{align}
\frac1{2N}\|X\hat\beta-f^0\|_2^2 \le \frac1{2N}\|X\beta-f^0\|_2^2 +3\lambda\Omega^{\model}(\beta).
\end{align}

By choosing $x = \log p$, we can ensure our inequality holds with probability at least $1 - 4/p$. This means,
\begin{align}
\nu_1^2 =& \frac{\sigma^2}{N}\log p\\
\nu_2^2 =& \frac{\sigma^2}{4N}\left[ M + \sqrt{8M\log p} + 8\log p\right].
\end{align}

Define $\tilde\nu_1^2\overset{\text{def}}{=}\frac{\sigma^2}{N}\log p$ and notice that $\nu_2^2 \le 6\sigma^2\log p / N\overset{\text{def}}{=}\tilde\nu_2^2$ if $\log p \ge M/8$. Now, as long as $\log p \ge M/8$, we can take $\lambda \ge 2(\tilde\nu_1 + 2\tilde\nu_2) = 2(1 + 2\sqrt{6})\sigma\sqrt{\log p / N}$ and
\begin{align}
\alpha = \frac{\tilde\nu_1 + \tilde\nu_2}{\tilde\nu_1 + 2\tilde\nu_2} = \frac{1 + \sqrt{6}}{1 + 2\sqrt{6}},
\end{align}
with probability at least $1 - 4/p$, we have
\begin{align}
\frac1{2N}\|X\hat\beta-f^0\|_2^2 \le \frac1{2N}\|X\beta-f^0\|_2^2 +3\lambda\Omega^{\model}(\beta).
\end{align}
This holds simultaneously for all $\beta$; this may be succinctly
expressed by adding $\min_\beta$ to the right hand side.

\end{proof}

\begin{proof}[Proof of Corollary~\ref{thm:slow_rate}]
We plug $\beta^0$ into the right-hand side of Theorem~\ref{thm:oracle_slow_rate} and observe that
\begin{align}
  \Omega^{\model}(\beta^0)&=\alpha\sum_{j\in
    \mathcal{S}^0}\|\beta^0_j\|_2+(1-\alpha)\sum_{j\in \mathcal{N}^0}\|\beta_{j,-1}^0\|_2\\
  &\le \alpha\sum_{j\in \mathcal{L}^0}\|\beta^0_j\|_2+\sum_{j\in
    \mathcal{N}^0}\|\beta_{j}^0\|_2\\
  &=\alpha\sum_{j\in \mathcal{L}^0}|\beta^0_{j1}|+\sum_{j\in \mathcal{N}^0}\|\beta_{j}^0\|_2
\end{align}

\end{proof}

\section{Proof of Lower Bound on \spam's Prediction Error}
\label{sec:proof-lower-bound}

We assume that all $p$ features are linear with equal coefficients,
i.e., $\beta_{j}^0=be_1\in\mathbb R^M$
and consider an asymptotic regime in which $p$ is fixed and $N=pM$, with $M,N\to\infty$.  We assume that all
features are orthogonal, i.e., $\frac1{N}X^TX=I_{pM}$.  In the main
paper, we note that \spam~in this case is given by the expression:
$$
\hat\beta_j^{\spam}=\gamma_j(\lambda)\frac1{N}X_j^Ty\quad\text{where}\quad\gamma_j(\lambda)=\left(1-\frac{\lambda}{\|\frac1{N}X_j^Ty\|_2}\right)_+.
$$
Now, $\frac{1}{N}X_j^Ty=be_1+U_j$ where
$U_j=\frac{1}{N}X_j^T\epsilon\sim N(0,\frac{\sigma^2}{N}I_M)$.  Since
$\|\frac1{N}X_j^Ty\|_2^2\to b^2+\sigma^2/p$, asymptotically,
the shrinkage factor $\gamma_j(\lambda)=\gamma$ is a nonrandom value,
not depending on $j$, and the prediction error is
\begin{align*}
  \frac1{N}\|X\hat\beta^{\spam}-X\beta^0\|^2&=\sum_{j=1}^p\|\gamma(be_1+U_j)-be_1\|^2\\
&=\gamma^2(b^2p+\sum_{j=1}^p\left[\|U_j\|^2
    +2bU_{j1}\right])+pb^2-2\gamma\sum_{j=1}^pb(b+U_{j1})\\
&\to \gamma^2(b^2p+\sigma^2)+pb^2-2\gamma pb^2.
\end{align*}
For the best possible asymptotic error, we can choose
$\gamma=pb^2/(pb^2+\sigma^2)$ (equivalent to choosing the best
$\lambda$).  At this value, 
$$
\lim_{N\to\infty}\frac1{N}\|X\hat\beta^{\spam}-X\beta^0\|^2\ge \frac{b^2}{1/b^2+p/\sigma^2}>0.
$$
Thus, \spam~is not consistent in terms of prediction error in this asymptotic regime.

To see that \model~with $\lambda\alpha=0$ and
$\lambda(1-\alpha)=\infty$ is consistent in terms of prediction error,
observe that  
$\hat\beta^{\model}=(X_{j1}^Ty)e_1=(b+U_{j1})e_1$ and
$$
\frac1{N}\|X\hat\beta^{\model}-X\beta^0\|^2=\sum_{j=1}^p\|(b+U_{j1})e_1-be_1\|^2=\sum_{j=1}^pU_{j1}^2\sim\frac{\sigma^2}{N}\chi^2_p\to0.
$$

\section{Experiments}
\label{appendix:experiments}

\begin{figure*}[t]
\begin{center}
\begin{tabular}{cccc}
    \includegraphics[width=37.5mm]{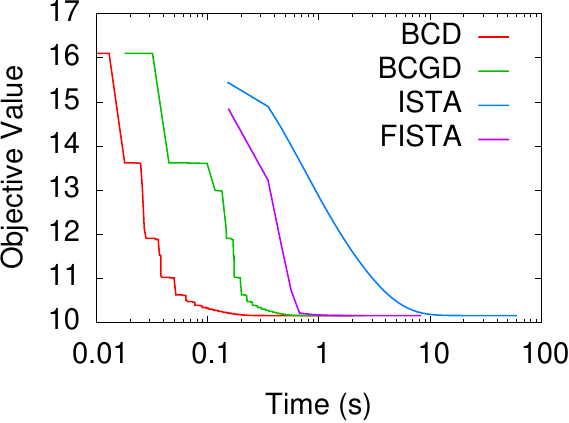} & \includegraphics[width=37.5mm]{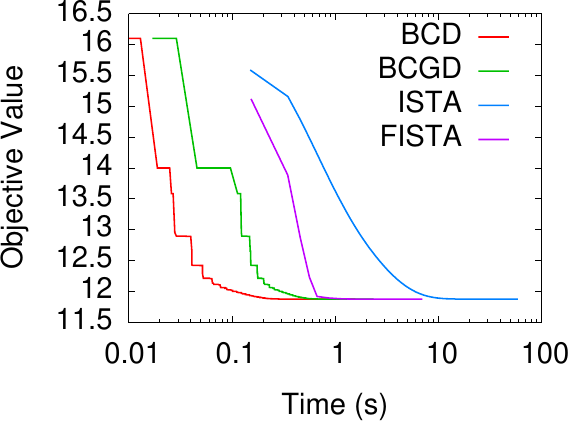} & \includegraphics[width=37.5mm]{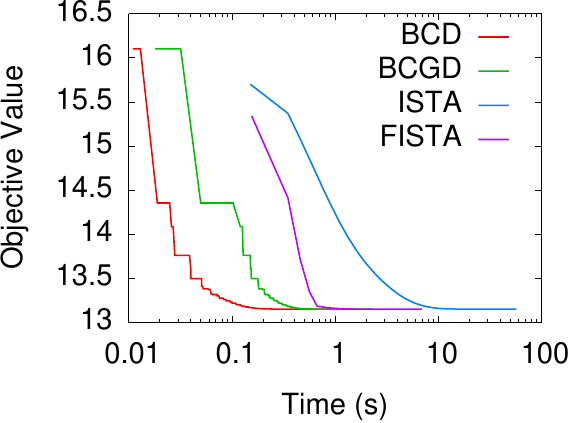} & \includegraphics[width=37.5mm]{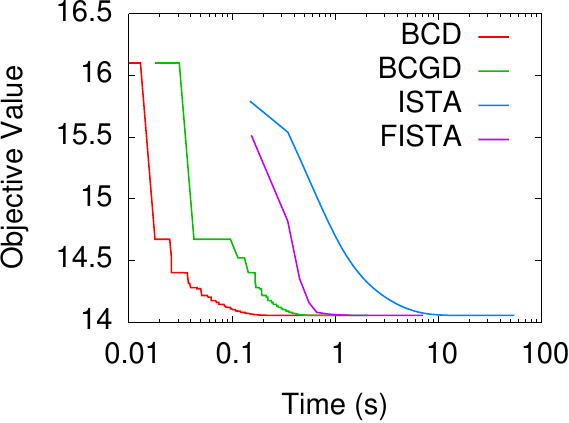}\\
    (a) $\lambda = 0.01, \alpha = 0.1$ & (b) $\lambda = 0.01, \alpha = 0.15$ & (c) $\lambda = 0.01, \alpha = 0.2$ & (d) $\lambda = 0.01, \alpha = 0.25$\\
    \includegraphics[width=37.5mm]{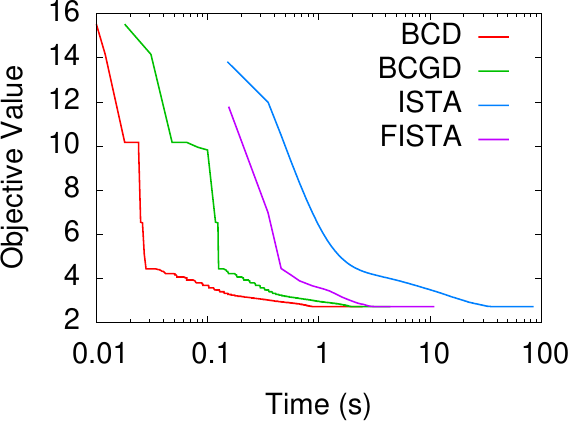} & \includegraphics[width=37.5mm]{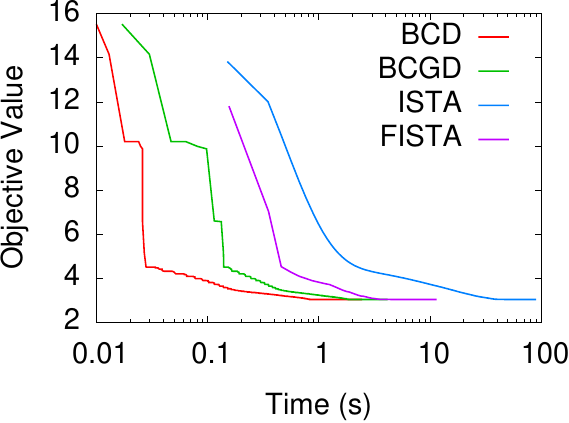} & \includegraphics[width=37.5mm]{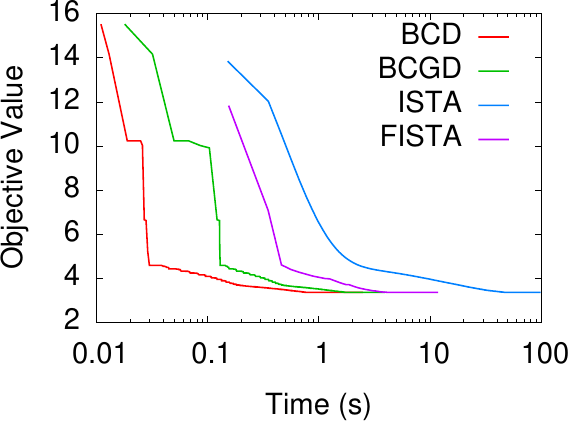} & \includegraphics[width=37.5mm]{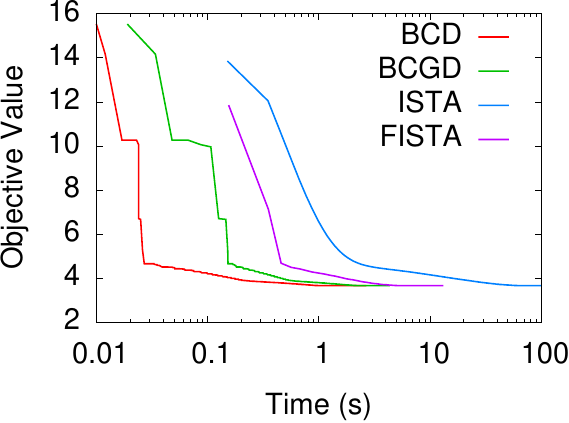}\\
    (e) $\lambda = 0.001, \alpha = 0.1$ & (f) $\lambda = 0.001, \alpha = 0.15$ & (g) $\lambda = 0.001, \alpha = 0.2$ & (h) $\lambda = 0.001, \alpha = 0.25$
\end{tabular}
\end{center}
\caption{\em Objective value vs. running time for synthetic dataset in Section 5.1.}
\label{splam:fig:synth_time}
\end{figure*}

In this section, we compare our BCGD algorithm and BCD algorithm with ISTA and FISTA~\citep{beck2009fista} using the synthetic function in Section~\ref{sec:synthetic}. We report running time of all the methods on a single core. For BCGD, ISTA, and FISTA, we start with a same initial step size. For fair comparison, we turn off the active set strategy in BCGD and BCD, and we directly use the design matrix after QR decomposition so that all methods are applied to the same optimization problem.

Figure~\ref{splam:fig:synth_time} illustrates the running time for all methods using the same synthetic dataset in Section~\ref{sec:synthetic} for different combinations of $\lambda$ and $\alpha$. As expected, FISTA converges much faster than ISTA. However, the BCGD algorithm is faster than both of the these methods. This is because BCGD uses more information in the sense of more frequent updates. In addition, we can see that BCD further speeds up the optimization since there is no step size in BCD; this not only solves exactly the subproblem but also avoids the possibility of dampening the step size and repeating the computation on the same block.

\bibliography{sigproc} 

\end{document}